\def\theequation{\arabic{section}.\arabic{equation}}
\newcommand{\be}{\begin{equation}}
	\newcommand{\en}{\end{equation}}
\newcommand{\bea}{\begin{eqnarray}}
	\newcommand{\ena}{\end{eqnarray}}
\newcommand{\beano}{\begin{eqnarray*}}
	\newcommand{\enano}{\end{eqnarray*}}
\newcommand{\bee}{\begin{enumerate}}
	\newcommand{\ene}{\end{enumerate}}
\newcommand{\B}{{\mathfrak B}}
\newcommand{\mc}{\mathcal}
\newcommand{\Sc}{{\cal S}}
\newcommand{\F}{{\cal F}}
\newcommand{\1}{1 \!\! 1}
\newcommand{\Hil}{\mc H}
\renewcommand{\l}{\langle}
\renewcommand{\r}{\rangle}
\newcommand{\pint}{\l\cdot,\cdot\r}
\newcommand{\pin}[2]{\l#1 , #2\r}
\newcommand{\nor}{\|\cdot\|}
\newtheorem{thm}{Theorem}
\newtheorem{lemma}[thm]{Lemma}
\newtheorem{prop}[thm]{Proposition}
\newtheorem{defn}[thm]{Definition}
\newenvironment{proof}{\noindent {\bf Proof --}}{\hfill$\square$ \vspace{3mm}\endtrivlist}
\begin{document}

\thispagestyle{empty}

\vspace*{2cm}

\begin{center}
{\Large \bf Heisenberg dynamics for non self-adjoint Hamiltonians: symmetries and derivations}   \vspace{2cm}\\

{\large F. Bagarello}\\
Dipartimento di Ingegneria,
Universit\`a di Palermo,\\ I-90128  Palermo, Italy\\
and I.N.F.N., Sezione di Catania\\
e-mail: fabio.bagarello@unipa.it\\

\end{center}

\vspace*{1cm}

\begin{abstract}

In some recent literature the role of non self-adjoint Hamiltonians, $H\neq H^\dagger$, is often considered in connection with gain-loss systems. The dynamics for these systems is, most of the times, given in terms of a Schr\"odinger equation. In this paper we rather focus on the Heisenberg-like picture of quantum mechanics, stressing the (few) similarities and the (many) differences with respected to the standard Heisenberg picture for systems driven by self-adjoint Hamiltonians. In particular, the role of the symmetries, *-derivations and integrals of motion is discussed.

\end{abstract}

\vspace{2cm}


\vfill


\newpage

\section{Introduction}

In the past years non-Hermitian Hamiltonians have played, and still it plays, a relevant role in quantum mechanics. This is because self-adjointess of the observables is a sufficient, but not necessary, condition for the reality of their eigenvalues, which is usually considered as an essential aspect of any realistic quantum model. This is clearly shown, for instance, by the cubic oscillator considered in \cite{ben1}, whose Hamiltonian has real eigenvalues, even in presence of a purely imaginary potential. Since then, a lot of work was done to achieve a deeper comprehension of this kind of operators, both for their physical relevance, and for their interesting mathematical properties. Some relevant monographs and edited books are  \cite{benbook}-\cite{bagspringer}.

Most of the times, even in presence of an Hamiltonian $H\neq H^\dagger$, which for simplicity we will always suppose here not to depend explicitly on time, the dynamics is thought to be deduced from a Schr\"odinger equation, $i\dot \Psi(t)=H\Psi(t)$, as for self-adjoint Hamiltonians. This is clearly an implicit assumption, since there is no mathematical reason, a priori, to prefer the previous equation rather than the analogous  Schr\"odinger equation for $H^\dagger$, $i\dot \Phi(t)=H^\dagger\Phi(t)$, except that for the physical implications of one choice or the other. However, any of these choices produce an Heisenberg-like dynamics whose properties are often not considered in detail, and which have very special features. For instance, the time evolution of the product of two observables $X$ and $Y$, $(XY)(t)$, is in general different from $X(t)Y(t)$. Also, the operators which commute with $H$, or with $H^\dagger$, are not constants of motion and, in this context, it is not entirely clear what a (mathematical? physical?) symmetry is. And yet, the generator of the dynamics is neither $H$ nor $H^\dagger$, and the derivation connected to this time evolution is not an algebraic derivation in the usual sense, since it does not satisfy the Leibnitz rule for the derivative of a product. On the other hand, intertwining operators appear to play a significant role, which should be fully understood. 

These are some of the main aspects we are going to consider in a sistematic way in the rest of the paper. In particular, to make the treatment of these problems significantly simpler, in this paper we will focus on finite dimensional Hilbert spaces since this guarantees the fact that we deal only with bounded operators. This is useful to avoid all the technicalities connected with the fact that, if the Hilbert space has dimension infinity, unbounded operators usually play a relevant role in quantum mechanics.

The paper is organized as follows:

In the next section we introduce the notation and list some preliminary results.

Then, in Section \ref{sect3}, we propose our definitions of $\gamma$-dynamics and $\gamma$-derivation, and we study their properties and their relations, in a rather general settings.

In Section \ref{sect4} we introduce a metric operator, and we show how this operator can be used in the analysis of what we call $\gamma$-symmetries and of their time evolution.

More adjoint maps are considered in Section \ref{sect5}, while Section \ref{sectconcl} contains our conclusions, and plans for the future. The Appendix contains a simple example relevant for what discussed in the main part of the paper.

  \section{The mathematical settings}\label{sect2}

  Let $\Hil$ be an Hilbert space, with $dim(\Hil)=N<\infty$, with scalar product $\pint$ and associated norm $\nor$, where $\|f\|=\sqrt{\pin{f}{f}}$, $\forall f\in\Hil$. As usual we have $\left<f,g\right>=\sum_{k=1}^N\overline{f_k}\,g_k$, $f,g\in\Hil$. The scalar product is also linked to the  conjugation $\dagger$ defined as $\pin{X^\dagger f}{g}=\pin{f}{Xg}$, $\forall f,g\in\Hil$. Here $X$ is an operator on $\Hil$ which, in our particular case, is a matrix of dimension $N\times N$. More in general, also in view of future extensions to infinite-dimensional Hilbert spaces, we could say that $X\in\B(\Hil)$, the $C^*$-algebra of all bounded operators on $\Hil$, \cite{br,sewell}. In doing this the dimensionality of $\Hil$ could also be infinite.

  The main ingredient of our analysis is an operator (i.e. a matrix) $H$, acting on $\Hil=\mathbb{C}^{N}$, with $H\neq H^\dagger$. We assume for simplicity (but this does not affect our conclusions) that $H$ has exactly $N$ distinct eigenvalues $E_n$, $n=0,1,2,\ldots,N$. Here, the adjoint $H^\dagger$ of $H$ is the usual one, i.e. the complex conjugate of the transpose of the matrix $H$. 
  
  Other than being all different, we will also assume that $E_n\in\mathbb{R}$, $n=1,2,,\ldots,N$. This is more important since, in this way, we can check that $H$ and $H^\dagger$ are isospectral and, as such, they admit an intertwining operator $X$ such that $XH=H^\dagger X$. This isospectrality is lost if even only one eigenvalue of $H$ is complex. This situation has been considered, for instance, in \cite{bagAoP1,bagAoP2}, to which we refer for more details.

 The $N$ distinct real eigenvalues of $H$ correspond to $N$ distinct eigenvectors $\varphi_k$, $k=1,2,\ldots,N$:
  \be
  H\varphi_k=E_k\varphi_k.
  \label{21}\en
  The set $\F_\varphi=\{\varphi_k,\,k=1,2,\ldots,N\}$ is a basis for $\Hil$, since the eigenvalues are all different. Then an unique biorthogonal basis of $\Hil$, $\F_\Psi=\{\Psi_k,\,k=1,2,\ldots,N\}$, surely exists, \cite{you,chri}: $\left<\varphi_k,\Psi_l\right>=\delta_{k,l}$, for all $k, l$. Moreover, for all $f\in\Hil$, we can write
  $f=\sum_{k=1}^N\left<\varphi_k,f\right>\Psi_k = \sum_{k=1}^N\left<\Psi_k,f\right>\varphi_k$. In \cite{bagAoP1} we have shown that the vectors in $\F_\Psi$ are eigenstates of $H^\dagger$ with eigenvalues $E_k$:
  \be
  H^\dagger\Psi_k=E_k\Psi_k,
  \label{22}\en
  $k=0,1,2,\ldots,N$. 
  
  Using the bra-ket notation we can write $\sum_{k=0}^N|\varphi_k\left>\right<\Psi_k|=\sum_{k=0}^N|\Psi_k\left>\right<\varphi_k|=\1$, where, for all $f,g,h\in\Hil$, we define $(|f\left>\right<g|)h:=\left<g,h\right>f$, and $\1$ is the identity operator on $\Hil$. We introduce the operators $S_\varphi=\sum_{k=0}^N|\varphi_k\left>\right<\varphi_k|$ and $S_\Psi=\sum_{k=0}^N|\Psi_k\left>\right<\Psi_k|$, as in \cite{bagbook}. These are bounded positive, self-adjoint, invertible operators, one the inverse of the other: $S_\Psi=S_\varphi^{-1}$. They are often called {\em metric operators}, since they can be used to define new scalar products in $\Hil$. This will be relevant later, in Section \ref{sect5}. Moreover:
  \be
  S_\varphi\Psi_n=\varphi_n,\quad S_\Psi\varphi_n=\Psi_n,\quad \mbox{ as well as }\quad S_\Psi H=H^\dagger S_\Psi,\quad S_\varphi H^\dagger= HS_\varphi.
  \label{23}\en
 Many other details are discussed in \cite{bagAoP1}. Here, what is more relevant for us, is to introduce the Heisenberg-like dynamics we will consider next, and discuss its properties\footnote{We observe that other possible Heisenberg-like dynamics can also be introduced, see \cite{bagAoP1,bagAoP2}, which are connected to some other scalar products we could also introduce on $\Hil$, as in Section \ref{sect5}.}. To fix the ideas, we consider the   Schr\"odinger equation $i\dot \Psi(t)=H\Psi(t)$ as the starting point of our analysis. Hence we introduce a map $\gamma^t$ on $\B(\Hil)$ as follows:
 \be
 \pin{\varphi(t)}{X\psi(t)}=\pin{\varphi_0}{\gamma^t(X)\psi_0},
 \label{24}\en
 where $\varphi(t)$ and $\psi(t)$ both obey the Schr\"odinger equation, with initial values  $\varphi(0)=\varphi_0$ and $\psi(0)=\psi_0$. Then we have
 \be
 \gamma^t(X)=e^{iH^\dagger t}Xe^{-iHt}.
 \label{25}\en
 This is what we will call {\em $\gamma$-dynamics}. It is clear that, in the present context, $\gamma^t(X)$ is well defined for all $X\in\B(\Hil)$. In fact, $\gamma^t(X)$ is simply the product of three $N\times N$ matrices or, more in general, of three bounded operators. It is also clear that, if $H=H^\dagger$, then $\gamma^t(X)=e^{iHt}Xe^{-iHt}$, which is exactly the standard Heisenberg dynamics for a self-adjoint Hamiltonian. We end this section by listing a series of well-known and useful properties of this latter, which we will try to reconsider for the non-Hermitian case in what follows. 
 
 For that, let us consider an operator $H_0=H_0^\dagger$, independent on time and such that $H_0\in\B(\Hil)$, and let us call
 \be
 \alpha_0^t(X)=e^{iH_0t}Xe^{-iH_0t},
 \label{26}\en
 $X\in\B(\Hil)$. Then we have:
 \begin{enumerate}
 	\item $\alpha_0^t(\1)=\1$, $\forall t\in\mathbb{R}$, where $\1$ is the identity operator on $\Hil$. Hence $\alpha_0^t$ {\em preserves } the identity operator.
 	\item $\alpha_0^t(XY)=\alpha_0^t(X)\alpha_0^t(Y)$, $\forall X,Y\in\B(\Hil)$: $\alpha_0^t$ is an authomorphism of $\B(\Hil)$.
 	\item $\alpha_0^t$ preserves the adjoint: $\alpha_0^t(X^\dagger)= (\alpha_0^t(X))^\dagger$, $\forall X\in\B(\Hil)$.
 	\item $\alpha_0^t$ is norm continuous: if $\{X_n\}$ converges in the norm of $\B(\Hil)$ to $X$, $\|X_n-X\|\rightarrow 0$ for $n\rightarrow\infty$, then $\|\alpha_0^t(X_n)-\alpha_0^t(X)\|\rightarrow 0$ in the same limit.
 	\item if $Z\in\B(\Hil)$ commutes with $H_0$, $[H_0,Z]=0$, then $\alpha_0^t(Z)=Z$,  $\forall t\in\mathbb{R}$: all the operators commuting with $H_0$ are {\em constant of motion}.
 	\item if we introduce $\delta_0(X)=\lim_{t,0}\frac{\alpha_0^t(X)-X}{t}$, the limit to be understood in the norm of $\B(\Hil)$, then $\delta_0$ is a *-derivation: (i) $\delta_0(X^\dagger)=(\delta_0(X))^\dagger$, and (ii) $\delta_0(XY)=X\delta_0(Y)+\delta_0(X)Y$, $\forall X,Y\in\B(\Hil)$.
 	\item The series $\sum_{k=0}^{\infty}\frac{t^k\delta_0^k(X)}{k!}$ is norm convergent to $\alpha_0^t(X)$ for all $X\in\B(\Hil)$. Here $\delta_0^k(X)$ is defined recursively as follows: $\delta_0^0(X)=X$, and $\delta_0^k(X)=\delta_0(\delta_0^{k-1}(X))$, $k\geq1$.
 \end{enumerate}

 \section{The $\gamma$-dynamics}\label{sect3}
 
 In this section we will discuss which of the previous properties of $\alpha_0^t$ are still true also for $\gamma^t$, and which are not. In our analysis we will try first to be rather generic, and then we will introduce and use an operator $S$ which intertwines between $H$ and $H^\dagger$ as in (\ref{23}). This {\em two-steps procedure} can be useful when extending our results to an infinite-dimensional Hilbert space, with unbounded observables, since in this case the existence and the properties of $\Sc$ are not as easy to deduce.

The first obvious result is that $\gamma^t(\1)=e^{iH^\dagger t}\,e^{-iHt}\neq\1$, since $H\neq H^\dagger$, as we will assume always, expect when stated. In particular, $\gamma^t(\1)$ depends explicitly on time, and obyes the differential equation $\frac{d\gamma^t(\1)}{dt}=\gamma^t(H^\dagger-H)$, with $\gamma^0(\1)=\1$. This will have interesting consequences. The first of these consequences is that  $\gamma^t(\1)$ encodes {\bf all} the differences between $\gamma^t$ and $\alpha_H^t$, where, in analogy with (\ref{26}), we put $\alpha_H^t(X)=e^{iHt}Xe^{-iHt}$. Indeed we have,
\be
\gamma^t(X)=\gamma^t(\1)\,\alpha_H^t(X)=(\alpha_H^t(X^\dagger))^\dagger\gamma^t(\1),
\label{31}\en
 $\forall X\in\B(\Hil)$. These identities are easily deduced. It is useful to stress that $\alpha_H^t$ has not all the same properties as $\alpha_0^t$ in the previous section. In particular, since $H\neq H^\dagger$, we observe that $\alpha_H^t(X^\dagger)\neq (\alpha_H^t(X))^\dagger$. However, we still have that  $\alpha_H^t(XY)=\alpha_H^t(X)\alpha_H^t(Y)$, $\forall X,Y\in\B(\Hil)$, and $\alpha_H^t(\1)=\1$.
 
 The following result holds:
 \begin{lemma}\label{lemma1}
 	The following statements are equivalent: 1) $\gamma^t(XY)=\gamma^t(X)\gamma^t(Y)$, $\forall X,Y\in\B(\Hil)$; and 2) $\gamma^t(\1)=\1$.
 \end{lemma}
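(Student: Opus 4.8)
The plan is to prove the two implications separately, using the factorization identity (\ref{31}) as the main tool. The direction $2)\Rightarrow 1)$ is immediate: if $\gamma^t(\1)=\1$, then (\ref{31}) gives $\gamma^t(X)=\alpha_H^t(X)$ for every $X\in\B(\Hil)$, and since $\alpha_H^t$ is multiplicative ($\alpha_H^t(XY)=\alpha_H^t(X)\alpha_H^t(Y)$, as recalled just before the lemma), the same holds for $\gamma^t$. So the bulk of the work is $1)\Rightarrow 2)$.

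For $1)\Rightarrow 2)$, I would first observe that setting $Y=\1$ in the multiplicativity hypothesis gives $\gamma^t(X)=\gamma^t(X)\gamma^t(\1)$ for all $X$, and setting $X=\1$ gives $\gamma^t(Y)=\gamma^t(\1)\gamma^t(Y)$ for all $Y$. Now I exploit that $\gamma^t$ is a bijection of $\B(\Hil)$ (it has inverse $X\mapsto e^{-iH^\dagger t}Xe^{iHt}$), so $\gamma^t(X)$ ranges over all of $\B(\Hil)$ as $X$ does. Hence $\gamma^t(\1)$ is both a right identity and a left identity for the whole algebra $\B(\Hil)$, which forces $\gamma^t(\1)=\1$. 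Alternatively, and even more directly, take $X=Y=\1$ in hypothesis 1): this yields $\gamma^t(\1)=\gamma^t(\1)^2$, so $\gamma^t(\1)$ is idempotent; but $\gamma^t(\1)=e^{iH^\dagger t}e^{-iHt}$ is a product of two invertible operators, hence invertible, and the only invertible idempotent is $\1$.

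I expect no serious obstacle here; the statement is essentially a bookkeeping fact about the identity (\ref{31}). The one point that deserves a line of care is the claim that $\gamma^t(\1)$ is invertible (equivalently, that $\gamma^t$ is a bijection), which is where finite-dimensionality — or more precisely the invertibility of $e^{\pm iHt}$ and $e^{\pm iH^\dagger t}$ — is used; this is exactly the kind of point the author flags as delicate in the infinite-dimensional setting, but in the present bounded framework it is automatic. I would present the idempotent-plus-invertible argument as the cleanest route, mentioning the "right/left identity" reformulation only as a remark, and close by noting that the lemma makes precise the assertion from the start of the section that $\gamma^t(\1)$ encodes all the deviation of $\gamma^t$ from an automorphism.
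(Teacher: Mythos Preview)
Your proposal is correct and essentially matches the paper's argument: the paper uses (\ref{31}) to rewrite the multiplicativity condition as $\gamma^t(\1)\alpha_H^t(X)\left(\1-\gamma^t(\1)\right)\alpha_H^t(Y)=0$, then specializes to $X=Y=\1$ to obtain $\gamma^t(\1)\left(\1-\gamma^t(\1)\right)=0$ and concludes from the invertibility of $\gamma^t(\1)$ --- exactly your ``idempotent plus invertible'' route (b). Your alternative (a), via the bijectivity of $\gamma^t$ and the two-sided identity property, is a small variation the paper does not mention, but it leads to the same place with the same ingredients.
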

 \begin{proof}
 	Using (\ref{31}) we can check that $\gamma^t(XY)=\gamma^t(X)\gamma^t(Y)$ if and only if $$\gamma^t(\1)\alpha_H^t(X)\left(\1-\gamma^t(\1)\right)\alpha_H^t(Y)=0.$$ Since this equality must hold for all $X,Y\in\B(\Hil)$, it must hold in particular if $X=Y=\1$. Hence, since $\alpha_H^t(\1)=\1$,  we must have  $\gamma^t(\1)\left(\1-\gamma^t(\1)\right)=0$, which implies that $\gamma^t(\1)=\1$ due to the invertibility of $\gamma^t(\1)$.
 	The opposite implication is obvious.
 	
 \end{proof}
Of course, in view of what we have seen before, $\gamma^t(\1)=\1$ only if $H=H^\dagger$. Hence this lemma is saying that $\gamma^t$ is an automorphism of $\B(\Hil)$ only for self-adjoint $H$. Stated differently, properties 1. and 2. of $\alpha_0^t$ are both lost by $\gamma^t$. On the other hand, properties 3. and 4. of $\alpha_0^t$ can also be deduced for $\gamma^t$. In particular, the norm continuity of $\gamma^t$ is a consequence of the fact that $H\in\B(\Hil)$. Indeed in this case, since $\|H\|<\infty$,  the series $\sum_{k=0}^\infty \frac{1}{k!}\,(iHt)^k$ is norm-convergent for all $t\in\mathbb{R}$ and this allows us to conclude that
$$
\|e^{\pm iHt}\|\leq e^{|t|\|H\|}, \qquad \|e^{\pm iH^\dagger t}\|\leq e^{|t|\|H\|},
$$
 $\forall t\in\mathbb{R}$. Hence, if $\|X_n-X\|\rightarrow 0$ for $n\rightarrow\infty$, then 
 $$
 \left\|\gamma^t(X_n)-\gamma^t(X)\right\|= \left\|e^{iH^\dagger t}(X_n-X)e^{-iHt}\right\|\leq $$
 $$\leq\left\|e^{iH^\dagger t}\right\|\left\|X_n-X\right\|\left\|e^{-iHt}\right\| \leq e^{2|t|\|H\|}\left\|X_n-X\right\|\rightarrow0,
 $$
 for $n\rightarrow\infty$, as we had to prove.

 Pushing forward our parallel between $\alpha_0^t$ and $\gamma^t$, we now define a map $\delta_\gamma:\B(\Hil)\rightarrow\B(\Hil)$ as follows:
 \be
 \delta_\gamma(X)=\|.\|-\lim_{t,0}\frac{\gamma^t(X)-X}{t}=i\left(H^\dagger X-XH\right),
 \label{32}\en
 $X\in\B(\Hil)$, where the last result follows from the continuity of $H$, $H^\dagger$, and of their exponentials. Notice that $ \delta_\gamma(X)\in\B(\Hil)$. This is what we call {\em $\gamma$-derivation}. In particular, extending a  standard definition in algebraic quantum theory, \cite{br,br2}, we say that $\delta_\gamma$ is {\em inner}, in the sense that we have an operator, $H$, which together with $H^\dagger$, {\em represents} the map $\delta_\gamma$ as a sort of generalized commutator.
 
 It is now interesting to discuss the main properties of $\delta_\gamma$, and its connections with $\delta_0$ and with $\gamma^t$.
 
 We first observe that $\delta_\gamma(X^\dagger)=(\delta_\gamma(X))^\dagger$, $\forall X\in\B(\Hil)$. It is also possible to check that $\delta_\gamma$ is norm-continuous. Indeed we have $\|\delta_\gamma(X_n)-\delta_\gamma(X)\|\leq2\|H\|\|X_n-X\|\rightarrow0$, for any sequence $\{X_n\}$ norm-convergent to $X$. Moreover $\delta_\gamma\in\B(\B(\Hil))$, or, stated explicitly, $\delta_\gamma$ is a linear operator on the set $\B(\Hil)$. An interesting result is given by the following proposition, where we introduce, as for $\delta_0$, $\delta_\gamma^0(X)=X$, and $\delta_\gamma^k(X)=\delta_\gamma(\delta_\gamma^{k-1}(X))$, $k\geq1$.
 \begin{prop}\label{prop2}
 	The series $\sum_{k=0}^{\infty}\frac{t^k\delta_\gamma^k(X)}{k!}$ is norm convergent to $\gamma^t(X)$, for all $X\in\B(\Hil)$. 
 \end{prop}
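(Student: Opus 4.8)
The plan is to observe that $\delta_\gamma$ is a \emph{bounded} operator on the Banach space $\B(\Hil)$, so that the series in question is nothing but the operator exponential $e^{t\delta_\gamma}$ evaluated at $X$, and then to identify this exponential explicitly.

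For the convergence part I would use the estimate already recorded above, $\|\delta_\gamma(Y)\|\le 2\|H\|\,\|Y\|$, which iterated gives $\|\delta_\gamma^k(X)\|\le (2\|H\|)^k\|X\|$. Hence $\sum_{k\ge0}\frac{|t|^k}{k!}\|\delta_\gamma^k(X)\|\le\|X\|\,e^{2|t|\|H\|}<\infty$, so the series converges absolutely in the norm of $\B(\Hil)$ for every $X\in\B(\Hil)$ and every $t\in\mathbb{R}$; I write its sum as $(e^{t\delta_\gamma})(X)$, where $e^{t\delta_\gamma}\in\B(\B(\Hil))$ is the usual exponential.

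To compute the limit I would split $\delta_\gamma$ into commuting pieces. Define $L,R\in\B(\B(\Hil))$ by $L(Y)=iH^\dagger Y$ and $R(Y)=iYH$; associativity of the operator product gives $LR=RL$, and $\delta_\gamma=L-R$. For commuting bounded operators the exponential factorizes (the Cauchy rearrangement being legitimate by absolute convergence), so $e^{t\delta_\gamma}=e^{tL}e^{-tR}$. Since $L^k(Y)=(iH^\dagger)^kY$ and $R^k(Y)=Y(iH)^k$, one reads off $e^{tL}(Y)=e^{iH^\dagger t}Y$ and $e^{-tR}(Y)=Ye^{-iHt}$; combining, $(e^{t\delta_\gamma})(X)=e^{iH^\dagger t}Xe^{-iHt}=\gamma^t(X)$ by (\ref{25}). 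Alternatively, and perhaps cleaner in view of the intended infinite-dimensional extension, one can note that both $t\mapsto(e^{t\delta_\gamma})(X)$ and $t\mapsto\gamma^t(X)$ solve the linear Cauchy problem $\dot Y(t)=\delta_\gamma(Y(t))$, $Y(0)=X$ in $\B(\Hil)$ --- for $\gamma^t(X)$ this is exactly $\frac{d}{dt}\bigl(e^{iH^\dagger t}Xe^{-iHt}\bigr)=i\bigl(H^\dagger\gamma^t(X)-\gamma^t(X)H\bigr)=\delta_\gamma(\gamma^t(X))$, cf.\ (\ref{32}) --- and a linear ODE with bounded coefficients has a unique solution, so the two coincide.

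There is no deep obstacle here: the statement is essentially the ``bounded generator'' case of the standard exponential-series representation of a one-parameter group. The only points deserving a word of care are the factorization $e^{t(L-R)}=e^{tL}e^{-tR}$, which uses that $L$ and $R$ commute, and the termwise identifications $e^{\pm tL}(Y)=e^{\pm iH^\dagger t}Y$, $e^{\pm tR}(Y)=Ye^{\pm iHt}$, both immediate from the power series; everything else is bookkeeping with norm-convergent series of bounded operators.
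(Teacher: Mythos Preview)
Your proof is correct. The convergence estimate and your alternative argument via uniqueness of the Cauchy problem $\dot Y=\delta_\gamma(Y)$, $Y(0)=X$ are exactly the route the paper takes: it sets $\beta^t(X)=\gamma^t(X)-\sum_{k\ge0} t^k\delta_\gamma^k(X)/k!$, checks that $\dot\beta^t(X)=\delta_\gamma(\beta^t(X))$ with $\beta^0(X)=0$, and invokes an abstract ODE uniqueness theorem to conclude $\beta^t\equiv0$. Your primary method, however---splitting $\delta_\gamma=L-R$ into the commuting left and right multiplications by $iH^\dagger$ and $iH$ and then factorizing $e^{t(L-R)}=e^{tL}e^{-tR}$---is not in the paper and is more elementary: it is purely algebraic, identifies the sum in one stroke, and needs no external uniqueness result. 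The ODE route, on the other hand, is the one more likely to survive the passage to unbounded $H$ (where the power series need not converge but the evolution groups may still make sense), which is presumably why the paper, with an eye toward the infinite-dimensional extension announced in its introduction, prefers it; you already anticipate this trade-off yourself.
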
 

\begin{proof}
	First of all, we prove that the series is norm convergent. For that, it is sufficient to observe that $\|\delta_\gamma^k(X)\|\leq (2\|H\|)^k\|X\|$, for all $k\geq0$. This is clearly true for $k=0$. Suppose now that this inequality holds for some $k$. Hence we have
	$$
\|\delta_\gamma^{k+1}(X)\|	=\|\delta_\gamma(\delta_\gamma^{k}(X))\|\leq 2\|H\| \|\delta_\gamma^{k}(X)\|\leq (2\|H\|)^{k+1}\|X\|,
	$$
	because of our induction hypothesis, as we had to check. Then we have
	$$
	\left\|\sum_{k=0}^{\infty}\frac{t^k\delta_\gamma^k(X)}{k!}\right\|\leq e^{2|t|\|H\|}\|X\|.
	$$
	Notice that this is the same bound we have previously found when checking the continuity of $\gamma^t$. Once we have seen that the series converge, we want to prove that it converges to $\gamma^t(X)$, for all $X\in\B(\Hil)$ and for all $t\in\mathbb{R}$. 
	
	We define $\beta^t(X)=\gamma^t(X)-\sum_{k=0}^{\infty}\frac{t^k\delta_\gamma^k(X)}{k!}$, which is of course well defined in $\B(\Hil)$, also in view of what we have just proven. We have $\beta^0(X)=X-X=0$. Next we observe that
	\be
	\frac{d}{dt}\gamma^t(X)=\gamma^t(\delta_\gamma(X))=\delta_\gamma(\gamma^t(X)),
	\label{33}\en
	while, with standard computations,
	$$
	\frac{d}{dt}\sum_{k=0}^{\infty}\frac{t^k\delta_\gamma^k(X)}{k!}=\delta_\gamma\left(\sum_{k=0}^{\infty}\frac{t^k\delta_\gamma^k(X)}{k!}\right),
	$$
	where we have used the linearity of $\delta_\gamma$ and the fact that it is continuous. Therefore we have, using once more the linearity of $\delta_\gamma$, 
$$
\frac{d}{dt}\beta^t(X)=\delta_\gamma(\gamma^t(X))-\delta_\gamma\left(\sum_{k=0}^{\infty}\frac{t^k\delta_\gamma^k(X)}{k!}\right)=\delta_\gamma(\beta^t(X))=i\left(H^\dagger\beta^t(X)-\beta^t(X) H \right).
$$	
	It follows that $\beta^t(X)=0$ is a solution of this differential equation, with the correct initial condition. Because of Theorem 3.4 of \cite{bagmor}, we conclude that this is the only solution. Hence our claim follows.

\end{proof}

 So far we have shown that $\delta_\gamma$ is not so different from $\delta_0$. In particular, Proposition \ref{prop2} is the new form of property 7. for $\alpha_0^t$. However, this is not the end of the story:  let us now focus on the differences between $\delta_0$ and $\delta_\gamma$.
 
 The first evident difference is that, while $\delta_0(\1)=0$, $\delta_\gamma(\1)=i(H^\dagger-H)$, which is zero only if $H=H^\dagger$. In this case, and only in this case, $\delta_\gamma$ is a *-derivation. More in detail, we have the following proposition:
 \begin{prop}\label{prop3}
 	The following statements are equivalent: 1) $\delta_\gamma$ is a *-derivation; 2) $\delta_\gamma(\1)=0$; 3) $H=H^\dagger$; 4) $\gamma^t(\1)=\1$; 5)  $\gamma^t(XY)=\gamma^t(X)\gamma^t(Y)$, $\forall X,Y\in\B(\Hil)$.
 \end{prop}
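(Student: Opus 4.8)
The plan is to reduce all five conditions to statement 3), $H=H^\dagger$, which is the most concrete, and then close a short cycle of implications. Almost all the needed ingredients are already available: Lemma \ref{lemma1} handles the equivalence of 4) and 5), the identity $\delta_\gamma(\1)=i(H^\dagger-H)$ coming from (\ref{32}) handles 2) $\Leftrightarrow$ 3), and property 6. of Section \ref{sect2} will take care of 3) $\Rightarrow$ 1).

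First I would record the easy equivalences. Since $\delta_\gamma(\1)=i(H^\dagger-H)$ by (\ref{32}), statement 2) holds precisely when $H^\dagger-H=0$, i.e. exactly when 3) holds; thus 2) $\Leftrightarrow$ 3). For 3) $\Rightarrow$ 4): if $H=H^\dagger$ then $\gamma^t(\1)=e^{iH^\dagger t}e^{-iHt}=e^{iHt}e^{-iHt}=\1$ for every $t\in\mathbb{R}$. For the converse 4) $\Rightarrow$ 3), I would use the differential equation $\frac{d}{dt}\gamma^t(\1)=\gamma^t(H^\dagger-H)$ noted just before Lemma \ref{lemma1}: if $\gamma^t(\1)\equiv\1$, then its derivative at $t=0$ vanishes, and this derivative equals $\delta_\gamma(\1)=i(H^\dagger-H)$, forcing $H=H^\dagger$. (Equivalently, differentiate $\gamma^t(\1)=\1$ at the origin and invoke (\ref{32}) directly.) Finally, 4) $\Leftrightarrow$ 5) is exactly Lemma \ref{lemma1}. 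At this point 2), 3), 4), 5) are all mutually equivalent.

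It remains to bring 1) into the loop. For 1) $\Rightarrow$ 2): a *-derivation in particular obeys the Leibnitz rule, so $\delta_\gamma(\1)=\delta_\gamma(\1\cdot\1)=\1\,\delta_\gamma(\1)+\delta_\gamma(\1)\,\1=2\,\delta_\gamma(\1)$, whence $\delta_\gamma(\1)=0$, which is 2). For the reverse direction I would argue 3) $\Rightarrow$ 1): when $H=H^\dagger$, (\ref{32}) becomes $\delta_\gamma(X)=i(HX-XH)$, which is precisely the map $\delta_0$ of Section \ref{sect2} with the choice $H_0=H$; by property 6. this $\delta_0$ is a *-derivation, hence so is $\delta_\gamma$. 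Combining, 1) $\Rightarrow$ 2) $\Leftrightarrow$ 3) $\Rightarrow$ 1) closes the cycle, and together with 3) $\Leftrightarrow$ 4) $\Leftrightarrow$ 5) the proposition follows.

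I do not anticipate a real obstacle: each implication is a one-line consequence of facts already established (the formula for $\delta_\gamma(\1)$, Lemma \ref{lemma1}, and property 6.). The only point deserving a little care is the converse 4) $\Rightarrow$ 3), where one must use that 4) is asserted for all $t$ — so that differentiation at $t=0$ is legitimate — rather than for a single value of $t$; the required norm differentiability is guaranteed by the boundedness of $H$, exactly as in the derivation of (\ref{32}).
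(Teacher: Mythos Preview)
Your proof is correct and covers all the equivalences. The overall logical organization (reducing everything to 3), using Lemma \ref{lemma1} for 4) $\Leftrightarrow$ 5), reading off 2) $\Leftrightarrow$ 3) from $\delta_\gamma(\1)=i(H^\dagger-H)$) matches the paper's. The one place where you genuinely differ is the link between 1) and 2). The paper computes the Leibniz defect directly,
\[
\Delta_\gamma(XY):=\delta_\gamma(XY)-X\delta_\gamma(Y)-\delta_\gamma(X)Y=-X\,\delta_\gamma(\1)\,Y,
\]
and reads off both directions of 1) $\Leftrightarrow$ 2) from this identity (taking $X,Y$ invertible for one direction, and using $\delta_\gamma(\1)=0$ for the other). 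You instead use the standard trick $\delta_\gamma(\1)=\delta_\gamma(\1\cdot\1)=2\delta_\gamma(\1)$ for 1) $\Rightarrow$ 2), and close the loop through 3) $\Rightarrow$ 1) via property 6. Your route is slightly more elementary and does not require any computation specific to the form of $\delta_\gamma$; the paper's route has the advantage of exhibiting the explicit obstruction $-X\delta_\gamma(\1)Y$ to the Leibniz rule, which is informative beyond the bare equivalence. Your treatment of 4) $\Rightarrow$ 3) by differentiating at $t=0$ is also a bit more explicit than the paper, which simply refers back to the earlier remark that $\gamma^t(\1)=\1$ forces $H=H^\dagger$.
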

\begin{proof}
	We have just seen that  $\delta_\gamma(X^\dagger)= (\delta_\gamma(X))^\dagger$. Let us introduce $\Delta_\gamma(XY)=\delta_\gamma(XY)-X\delta_\gamma(Y)-\delta_\gamma(X)Y$. $\delta_\gamma$ is a *-derivation if and only if $\Delta_\gamma(XY)=0$ for all $X,Y\in\B(\Hil)$. Using (\ref{32}) we find that $\Delta_\gamma(XY)=-X\delta_\gamma(\1)Y$. Since this should be zero for all $X$ and $Y$, it must be zero, in particular, for those $X$ and $Y$ which admit inverse. Hence $\delta_\gamma(\1)=0$. Vice-versa, the same equality shows that, if $\delta_\gamma(\1)=0$, then $\Delta_\gamma(XY)=0$ for all $X,Y\in\B(\Hil)$. Then, $\delta_\gamma$ is a *-derivation. This proves the equivalence between 1) and 2). We have just observed that   $\delta_\gamma(\1)=0$ if and only if $H=H^\dagger$. And we have also observed that this latter is equivalent to having $\gamma^t(\1)=\1$. The equivalence between 4) and 5) was proved in Lemma \ref{lemma1}.
	
\end{proof}
 
 The main content of this proposition is that $\gamma^t$ cannot be an authomorphism of $\B(\Hil)$ if any of the above properties (and therefore all) is violated. Another interesting, and serious, consequence of this proposition is discussed at length in the following section, and it is related to the possibility of deducing the dynamics of $\Sc$ using the $\gamma^t$-dynamics.

 \subsection{Consequences of Proposition \ref{prop3}}\label{sectnonauto}
 
 In quantum mechanics, when looking for the time evolution of a given system $\Sc$ driven by a time-independent self-adjoint Hamiltonian $H_0$, a possible approach is to use the Heisenberg representation as in (\ref{26}). As we have seen, $\alpha_0^t$ is a *-authomorphism.
 
 Now, if we try to repeat the same approach when $H_0$ is replaced by $H\neq H^\dagger$, and consequently $\alpha_0^t$ is replaced by $\gamma^t$ in (\ref{25}), serious difficulties arise as a consequence of Proposition \ref{prop3}. Indeed, let us suppose that $\Sc$ is completely described, for simplicity, by just two observables $A$ and $B$. Here, in view of what we have discussed so far, we will not insist on the fact that these two observables are self-adjoint or not. Using (\ref{33}) we have
\be
 \begin{cases}
 	& \frac{d}{dt}\gamma^t(A)=\gamma^t(\delta_\gamma(A)) \\
 	&\frac{d}{dt}\gamma^t(B)=\gamma^t(\delta_\gamma(B)).
 \end{cases} 
\label{34}\en
 Now, suppose for a moment that $\delta_\gamma(A)$ and $\delta_\gamma(B)$ are both linear in $A$, $B$ and $\1$, i.e. that $\delta_\gamma(A)=k_AA+k_BB+k_{\1}\1$ and  $\delta_\gamma(B)=k'_AA+k'_BB+k'_{\1}\1$, for some complex set of parameters $k_A, k_B,...$. Then the above system becomes
 $$
 \begin{cases}
 	& \frac{d}{dt}\gamma^t(A)= k_A\gamma^t(A)+k_B\gamma^t(B)+k_{\1}\gamma^t(\1)\\
 	&\frac{d}{dt}\gamma^t(B)=k'_A\gamma^t(A)+k'_B\gamma^t(B)+k'_{\1}\gamma^t(\1).
 \end{cases} 
 $$ 
 If, in particular, $k_{\1}=k'_{\1}=0$, this is a closed system of coupled differential equations for the two unknowns $\gamma^t(A)$ and $\gamma^t(B)$ which can be easily solved, recalling that $\gamma^0(A)=A$ and $\gamma^0(B)=B$. If $k_{\1}$ or $k'_{\1}$, or both, are non zero, we could still solve the system above once we know $\gamma^t(\1)$, since it does not depend on $A$ and $B$, and can be seen as a sort of external driving force arising because of the non Hermiticity of $H$. However, the situation changes completely if either $\delta_\gamma(A)$, or $\delta_\gamma(B)$, or both, are non-linear in $A$ and $B$, since in this case $\gamma^t(AB)\neq\gamma^t(A)\gamma^t(B)$,  $\gamma^t(A^2)\neq(\gamma^t(A))^2$, and so on. This means that, the system (\ref{34}) is, most probably, not closed and, in the attempt to close the system, we need to add differential equations for these nonlinear terms in $\delta_\gamma(A)$ and in $\delta_\gamma(B)$. However, there is no guarantee that it is possible to close the system, in this way. Moreover, even if we can find the solution for some of the operators describing the system, it is not true in general that we can deduce other relevant time evolutions. A very simple example of this situation is discussed in the Appendix. Here we only want to suggest that a natural way out exists, but it is surely not satisfying: one could assume that the Heisenberg-like dynamics, even when $H\neq H^\dagger$, is only driven by $H$ so that, rather than (\ref{25}) or (\ref{26}), one consider, as starting point, the { rule} $\alpha_H^t(X)=e^{iHt}Xe^{-iHt}$. This is, as a matter of fact, the {\em dual assumption} of saying that in the Schr\"odinger representation, the dynamics is given by $i\dot \Psi(t)=H\Psi(t)$, giving no role to $H^\dagger$. However, $\alpha_H^t$ does not preserve the adjoint: $\alpha_H^t(X^\dagger)\neq (\alpha_H^t(X))^\dagger$ and, possibly more important, (\ref{24}) does not hold, since  $\pin{\varphi_0}{\alpha_H^t(X)\psi_0}=\pin{e^{-iH^\dagger t}\varphi(0)}{Xe^{-iHt}\psi(0)}$, which suggests two different Schr\"odinger dynamics for the states of $\Sc$ depending on the fact that they appear in the first or in the second part of the scalar product. This is quite unphysical, of course.

 \section{The intertwining operator between $H$ and $H^\dagger$}\label{sect4}
 
 In Section \ref{sect3} we have only considered what happens in presence of an Hamiltonian $H$ which is different from its adjoint $H^\dagger$, neglecting completely the role of the operators $S_\varphi$ and $S_\Psi$ we have considered in Section \ref{sect2}, see in particular (\ref{23}). In this section we will focus on what the existence of this operator implies. This will be particularly useful in connection with symmetries and integrals of motion, as we will see soon. However, before going on, it is useful to recall once more that when $dim(\Hil)=\infty$, and in particular in presence of unbounded operators, the use of these operators is rather subtle, and requires deep results in functional analysis and in operator theory, \cite{bagbook}.
 
 To simplify the notation we will simply put $S=S_\Psi$. Hence $S_\varphi=S^{-1}$, and we get the following useful (and well known) relations:
 \be
 H^\dagger=SHS^{-1}, \qquad e^{iH^\dagger t}=Se^{iHt}S^{-1},
 \label{41}\en
 whose mathematical proofs are here strongly connected to the fact that all the operators involved are bounded\footnote{In particular, since $H\neq H^\dagger$, the existence itself of $e^{iHt}$ for unbounded $H$ is not guaranteed by the spectral theorem, neither it can be checked with a norm convergent sum.}.
 
 A first easy result concerns $\gamma^t(\1)$, which, as we have discussed before, plays a relevant role in the $\gamma$-dynamics. Using (\ref{41}) we can write
 \be
 \gamma^t(\1)=S\alpha_H^t(S^{-1})=(\alpha_H^t(S^{-1}))^\dagger S,
 \label{42}\en
 where we recall that $\alpha_H^t(X)=e^{iHt}Xe^{-iHt}$. The companion infinitesimal result of this formula is the following:
 \be
 \delta_\gamma(X)=S\delta_H(S^{-1}X),
 \label{43}\en
 where we have introduced, with obvious notation, $\delta_H(Y)=i[H,Y]$. Notice that $\delta_H$ is  not a *-derivation, since $\delta_H(Y^\dagger)\neq (\delta_H(Y))^\dagger$. Formula (\ref{43}) can be extended to powers of $\delta_\gamma$, and the extension can be proved by induction. In particular we have
   \be
  \delta_\gamma^l(X)=S\delta_H^l(S^{-1}X),
  \label{44}\en
 where $l=0,1,2,\ldots$. An immediate consequence of this equality is an alternative way to deduce what stated in Proposition \ref{prop2}. Indeed we have
 $$
 \sum_{k=0}^{\infty}\frac{t^k\delta_\gamma^k(X)}{k!}=S\sum_{k=0}^{\infty}\frac{t^k\delta_H^k(S^{-1}X)}{k!}=Se^{itH}(S^{-1}X)e^{-itH}=e^{itH^\dagger}Xe^{-itH}=\gamma^t(X).
 $$
 In this derivation we have also used the continuity of $S$, and we have re-summed the series adapting the same arguments of Proposition \ref{prop2} to $\alpha_H^t$.
 
 \begin{defn}\label{simmetria}
 	An operator $X\in\B(\Hil)$ is called a $\gamma$-symmetry if\be [H,S^{-1}X]=0.\label{45}\en
 \end{defn} 
 In the next Proposition we show, among other results, the relation between $\gamma$-symmetries and integrals of motion (or, maybe more correctly, $\gamma$-motion), and with intertwining operators.
 
 \begin{prop}\label{prop4}
 An operator $X\in\B(\Hil)$ is a $\gamma$-symmetry if and only if any of the following statements, all equivalent, are satisfied.
 
 \begin{enumerate}
 	\item \be [H^\dagger,X^\dagger S^{-1}]=0; \label{46}\en
 	
 	\item \be H^\dagger X=XH; \label{47}\en
 	
 	\item \be \delta_\gamma(X)=0; \label{48}\en
 	
 	\item \be \gamma^t(X)=X. \label{49}\en
 
 \end{enumerate}
 \end{prop}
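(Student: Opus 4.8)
The plan is to treat the defining relation of a $\gamma$-symmetry, $[H,S^{-1}X]=0$ from (\ref{45}), as a hub, and to obtain each of (\ref{46})--(\ref{49}) from it by a short manipulation using only the intertwining identities (\ref{41}), the self-adjointness of $S=S_\Psi$ recorded in Section \ref{sect2}, the definition (\ref{32}) of $\delta_\gamma$, the identity (\ref{33}) and Proposition \ref{prop2}. Concretely I would prove that the $\gamma$-symmetry property is equivalent to (\ref{47}), that (\ref{47}) is equivalent to (\ref{48}), that (\ref{48}) is equivalent to (\ref{49}), and that the $\gamma$-symmetry property is equivalent to (\ref{46}); chaining these gives all the pairwise equivalences claimed.

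For (\ref{47}): the condition $[H,S^{-1}X]=0$ reads $HS^{-1}X=S^{-1}XH$, and left-multiplication by $S$ together with $SHS^{-1}=H^\dagger$ from (\ref{41}) turns it into $H^\dagger X=XH$; the step is reversible by left-multiplying with $S^{-1}$ and using $S^{-1}H^\dagger=HS^{-1}$. For (\ref{46}): since $S$ is self-adjoint, so is $S^{-1}$, hence taking the adjoint of $HS^{-1}X=S^{-1}XH$ yields $X^\dagger S^{-1}H^\dagger=H^\dagger X^\dagger S^{-1}$, i.e. (\ref{46}), and again this is an equivalence. (Alternatively one may pass from (\ref{47}) to (\ref{46}) directly by taking adjoints and inserting $HS^{-1}=S^{-1}H^\dagger$.)

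For the infinitesimal and finite statements: by (\ref{32}) one has $\delta_\gamma(X)=i(H^\dagger X-XH)$, so (\ref{48}) is literally a restatement of (\ref{47}). Finally, if $\gamma^t(X)=X$ for all $t$, differentiating at $t=0$ gives $\delta_\gamma(X)=0$; conversely, if $\delta_\gamma(X)=0$ then $\delta_\gamma^k(X)=0$ for every $k\geq1$, so the series of Proposition \ref{prop2} collapses and $\gamma^t(X)=\sum_{k\geq0}\frac{t^k\delta_\gamma^k(X)}{k!}=X$ --- or, equivalently, (\ref{33}) gives $\frac{d}{dt}\gamma^t(X)=\gamma^t(\delta_\gamma(X))=0$, so $\gamma^t(X)\equiv\gamma^0(X)=X$. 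This closes the argument.

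I do not anticipate a genuine obstacle: the proof is essentially bookkeeping with the relations $H^\dagger=SHS^{-1}$ and $HS^{-1}=S^{-1}H^\dagger$. The two points requiring a little care are (a) invoking the self-adjointness of $S$ so that the adjoint of the defining relation is again of the same algebraic form, which is what makes (\ref{46}) fall out cleanly, and (b) making the implication (\ref{48})$\Rightarrow$(\ref{49}) rigorous rather than merely formal, which is exactly where Proposition \ref{prop2} (or the uniqueness statement underlying (\ref{33})) is used.
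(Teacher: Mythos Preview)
Your proof is correct and follows essentially the same route as the paper: (\ref{46}) via the adjoint of (\ref{45}), (\ref{47}) via (\ref{41}), (\ref{48}) via (\ref{32}), and the equivalence (\ref{48})$\Leftrightarrow$(\ref{49}) via (\ref{33}). The only cosmetic differences are that for (\ref{49})$\Rightarrow$(\ref{48}) the paper differentiates for all $t$ and then invokes the invertibility of $e^{iH^\dagger t}$ and $e^{-iHt}$ to strip $\gamma^t$ off $\gamma^t(\delta_\gamma(X))=0$, whereas you simply evaluate the derivative at $t=0$, and for (\ref{48})$\Rightarrow$(\ref{49}) you also offer the alternative of collapsing the series from Proposition~\ref{prop2}; both variants are fine.
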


\begin{proof}
	First we observe that (\ref{46}) is just the hermitian conjugate of (\ref{45}). (\ref{47}) is equivalent to (\ref{45}) because of (\ref{41}), while (\ref{48}) is equivalent to (\ref{47}) because of (\ref{32}). Finally, if $\delta_\gamma(X)=0$, then by (\ref{33}) $\frac{d}{dt}\gamma^t(X)=0$, and therefore $\gamma^t(X)=\gamma^0(X)=X$. Viceversa, if $\gamma^t(X)=X$, then $\frac{d}{dt}\gamma^t(X)=0$ so that, again by (\ref{33}), $\gamma^t(\delta_\gamma(X))=0$. Now, since both $e^{iH^\dagger t}$ and $e^{-iHt}$ are invertible, (\ref{48}) follows. 
		
\end{proof}
 
 A first obvious, and reasonable, consequence of these results is that $S$ is a $\gamma$-symmetry, see (\ref{45}), and, as such, intertwines between $H$ and $H^\dagger$, see (\ref{47}) and (\ref{23}), and does not evolve in time: $\gamma^t(S)=S$. It is interesting to observe that, on the other hand, $S^{-1}$ is not a $\gamma$-symmetry. This is because, in general, $[H,S^{-2}]\neq0$.  This fact is related, of course, to our definition of $\gamma^t$.
 
 \vspace{2mm}
 
 In {\em ordinary} quantum mechanics, the role of integrals of motion is given to those observables which commute with the Hamiltonian, if any. Here we see that this is not really so, since integrals of motion do not commute with $H$, while their composition in (\ref{45}) do. So it is interesting to check what is, in this context, the role of those operators which do commute with $H$. This is the content of the following lemma:
 
 \begin{lemma}
 	Let $X\in\B(\Hil)$ be a $\gamma$-symmetry, and let $Y\in\B(\Hil)$ a second operator commuting with $H$. Then $XY$ is a $\gamma$-symmetry.
 \end{lemma}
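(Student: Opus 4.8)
The plan is to exploit the characterization of $\gamma$-symmetries given in Proposition~\ref{prop4}, rather than working directly from Definition~\ref{simmetria}. The cleanest route is to show that $XY$ satisfies condition~(\ref{47}), i.e.\ $H^\dagger(XY) = (XY)H$. First I would record the two hypotheses in their most convenient forms: since $X$ is a $\gamma$-symmetry, Proposition~\ref{prop4} gives $H^\dagger X = XH$; since $Y$ commutes with $H$, we have $HY = YH$. The computation is then immediate: $H^\dagger(XY) = (H^\dagger X)Y = (XH)Y = X(HY) = X(YH) = (XY)H$, so $XY$ verifies (\ref{47}) and is therefore a $\gamma$-symmetry. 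Alternatively, one can argue directly from (\ref{45}): $[H, S^{-1}XY] = [H, (S^{-1}X)Y]$, and since $S^{-1}X$ commutes with $H$ (that is (\ref{45}) for $X$) and $Y$ commutes with $H$ by hypothesis, the product $(S^{-1}X)Y$ commutes with $H$ as well, because the commutant of any single operator is an algebra.

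I would present the first version, since it is the shortest and makes transparent where each hypothesis is used. A small remark worth including: the hypotheses are genuinely asymmetric — $Y$ is assumed to commute with $H$ (not to be a $\gamma$-symmetry), and the order of the product matters a priori, so one should note that $XY$ (with $X$ the $\gamma$-symmetry on the left) is the object being claimed. One could also check that $YX$ is a $\gamma$-symmetry under the same hypotheses, by the same chain of equalities read with $Y$ on the left and using $H^\dagger Y = ?$ — but here caution is needed, since $Y$ commutes with $H$, not necessarily with $H^\dagger$, so $YX$ need \emph{not} be a $\gamma$-symmetry in general; I would either omit this or flag it explicitly.

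There is no real obstacle here: the result is a one-line consequence of the equivalent formulation (\ref{47}) in Proposition~\ref{prop4} together with the elementary fact that the set of operators commuting with a fixed operator is closed under multiplication. The only thing to be careful about is not to conflate ``commutes with $H$'' with ``commutes with $H^\dagger$'' or ``is a $\gamma$-symmetry'', since these are distinct conditions in the non-self-adjoint setting (indeed the whole point of Section~\ref{sect4} is that $\gamma$-symmetries do \emph{not} commute with $H$). So the write-up is essentially: invoke (\ref{47}) for $X$, use $[H,Y]=0$, chain the equalities, and conclude via the ``(\ref{47}) $\Leftrightarrow$ $\gamma$-symmetry'' part of Proposition~\ref{prop4}.
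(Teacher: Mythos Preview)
Your proposal is correct. The paper's own proof takes your \emph{alternative} route rather than your primary one: it works directly with Definition~\ref{simmetria} and expands
\[
[H,S^{-1}(XY)]=[H,S^{-1}X]\,Y+S^{-1}X\,[H,Y]=0,
\]
using $[H,S^{-1}X]=0$ (the $\gamma$-symmetry hypothesis on $X$) and $[H,Y]=0$. Your primary argument via the intertwining characterization~(\ref{47}) is equally valid and arguably cleaner, since it avoids $S^{-1}$ altogether; the paper's version has the minor advantage of staying at the level of the defining condition~(\ref{45}) without invoking Proposition~\ref{prop4}. In substance the two arguments are the same one-liner read through equivalent formulations, and you already anticipated both. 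Your side remark that $YX$ need not be a $\gamma$-symmetry is a worthwhile observation not present in the paper.
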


\begin{proof}
	Indeed we have
	$$
	[H,S^{-1}(XY)]=[H,S^{-1}X]Y+S^{-1}X[X,Y]=0,
	$$
	under our assumptions on $X$ and $Y$.

\end{proof}

This lemma has, in our opinion, an interesting interpretation: the operators commuting with $H$ are not  $\gamma$-symmetries by themselves, but can still be used to deform $\gamma$-symmetries into new $\gamma$-symmetries. On the other hand, if $X$ and $Y$ are both $\gamma$-symmetries, $XY$ is not, since in general $[H,S^{-1}(XY)]=S^{-1}X[H,S]S^{-1}Y\neq0$. Of course, if $[H,S]=0$, then $[H,S^{-1}(XY)]=0$ as well. But in this case the situation is not so interesting, since it is only compatible with the case in which $H=H^\dagger$, see (\ref{23}). On the other hand, in principle, if $X$ or $Y$, or both, are not invertible, $[H,S^{-1}(XY)]$ could be zero even when $[H,S]\neq0$. For instance, this happens if the range of the operator $(H-H^\dagger)Y$ belongs to the kernel of $X$.

Proposition  \ref{prop4} describes also the relation between $\gamma$-symmetries and intertwining operators (IOs). An operator $X$ is called and IO between $H$ and $H^\dagger$ if it satisfies (\ref{47}). We refer to \cite{bagint1,midya,lisok} for some results and applications of this kind of operators. It is easy to find solutions of (\ref{47}). Any operator of the kind
\be
X=\sum_{k=0}^Nx_k|\Psi_k\left>\right<\Psi_k|,
\label{410}\en
satisfies the equality in (\ref{47}), for all possible choices of $x_n\in\mathbb{C}$. In particular, if $x_n\neq0$ for all $n$, $X$ is invertible and $X^{-1}=\sum_{k=0}^Nx_k^{-1}|\varphi_k\left>\right<\varphi_k|$. Some easy comments are the following: (a) if $x_n\in\mathbb{R}$ for all $n$, then $X=X^\dagger$; (b) if $x_n=1$ for all $n$, then $X=S$; (c) $X\varphi_n=x_n\Psi_n$. Hence, $X$ maps eigenstates of $H$ into eigenstates of $H^\dagger$ with, in general, a different normalization. Of course, for this to be true, we must have $x_n\neq0$; (d) the fact that in (\ref{410}) we are considering finite sums could be relaxed, but it is still useful (here, at least) to assume that the series we get sending $N\rightarrow\infty$ converges in the norm of $\B(\Hil)$. In fact, under this assumption, $X\in\B(\Hil)$. However, in this case, even if $x_n\neq0$ for all $n\in\mathbb{N}$, we cannot be sure that $X^{-1}$ is bounded as well.

Summarizing, the metric operator $S$ is a $\gamma$-symmetry, but a $\gamma$-symmetry is something more general than the metric operator $S$. In particular, while $S$ must be positive, positivity is not required to $X$.

\vspace{2mm}

{\bf Remark:--} In what we have done so far, we have considered the case of non degenerate $H$: all the (real) eigenvalues have multiplicity one. It is a simple exercise to check that all the results we have deduced in this paper, including the existence and the analytic expression of the IOs, still hold true or can be easily adapted even when degenerate eigenvalues exist. On the other hand, losing reality of the eigenvalues produces serious  changes in some of the results. This is because, in particular, $H$ and $H^\dagger$ are no longer isospectral, and therefore they do not admit an IO.

\section{More adjoints}\label{sect5}

In the existing literature on non self-adjoint Hamiltonians the role of the adjoint $\dagger$ is not unique, see \cite{benbook,bagprocpa,specissue2021,bagspringer} and references therein. Other adjoint maps can be defined, in a rather natural way, due to the fact that different scalar products on the same Hilbert space can be considered and play a relevant role in the analysis of the system.

Recently, in \cite{baghat}, the construction of a chain of isospectral Hamiltonians (or isospectral in pairs, if some eigenvalue is complex) have been proposed using some special features of a given Hamiltonian $H\neq H^\dagger$, of the kind we are considering here. We devote this section to a brief analysis of $\gamma$-dynamics in the context considered in \cite{baghat}, which is however much more general than the situation we will consider here. In particular, rather than considering the (virtually infinite) hierarchy of operators constructed in \cite{baghat}, we will restrict to the first few. Moreover, we will change a little bit the notation coherently with the one adopted in this paper.

The starting point is the operator $S=S_\Psi$ which can be used, together with $S^{-1}$ to introduce two different scalar products
\be
\pin{f}{g}_\flat:=\pin{S^{-1}f}{g}, \qquad\qquad \pin{f}{g}_\sharp:=\pin{Sf}{g},
\label{51}\en
$\forall f,g\in\Hil$. The fact that these are really scalar product has been discussed in many papers. We refer to \cite{bagbookPT}, for instance. We can then introduce two adjoint maps as follows
\be
\pin{Xf}{g}_\flat=\pin{f}{X^{\flat}g}_\flat, \qquad\qquad \pin{Xf}{g}_\sharp=\pin{f}{X^{\sharp}g}_\sharp,
\label{52}\en
$\forall f,g\in\Hil$. It turns out that
\be
X^\flat=SX^\dagger S^{-1}, \qquad X^\sharp=S^{-1}X^\dagger S,
\label{53}\en
which satisfy the properties of the adjoint. For instance, both maps are linear and $(X^\flat)^\flat=(X^\sharp)^\sharp=X$. We observe also that
\be
(X^\flat)^\dagger=(X^\dagger)^\sharp,
\label{54}\en
which implies that the order of $\dagger$, $\sharp$ and $\flat$ is important.
Therefore, other than $H^\dagger$, we can also consider the operators $H^\flat$ and $H^\sharp$, and combinations of these adjoints (as, for instance, $(H^\flat)^\dagger$, $(H^\sharp)^\dagger$, etc...). However, due to (\ref{23}) and (\ref{53}), some of these maps are redundant. In fact, we can check that
\be
H=H^\sharp, \qquad\qquad H^\dagger=(H^\dagger)^\flat,
\label{55}\en
showing that, if all the $E_n$'s are real, $H$ is self-adjoint with respect to $\sharp$, and $H^\dagger$ is also self-adjoint, but with respect to $\flat$. Incidentally we observe that the two equalities in (\ref{55}) are equivalent, because of (\ref{54}). Summarizing, the introduction of the new scalar products in (\ref{51}) implies the existence of a new independent\footnote{Notice that, however, $H^\flat$ is similar to $H^\dagger$, because of (\ref{53})} Hamiltonian, $H^\flat$, which should be added to $H$ and $H^\dagger$. Following the idea in \cite{baghat}, it is easy to find the eigenstates of $H^\flat$, and to prove that it is isospectral to $H$ and $H^\dagger$. For that, we introduce the vectors $\xi_n=S\Psi_n$, $n=1,2,\ldots,N$. Hence $H^\flat\xi_n=SH^\dagger S^{-1}S\Psi_n=E_n\Psi_n$. Hence we have
\be
H\varphi_n=E_n\varphi_n, \qquad H^\dagger\Psi_n=E_n\Psi_n, \qquad H^\flat\xi_n=E_n\xi_n, \qquad (H^\flat)^\dagger\eta_n=E_n\eta_n,
\label{56}\en 
for all $n$. Here $\F_\eta=\{\eta_n\}$ is the set of vectors biorthonormal to $\F_\xi=\{\xi_n\}$,  defined by $\eta_n=S^{-1}\varphi_n$: $\pin{\eta_n}{\xi_m}=\pin{\varphi_n}{\Psi_m}=\delta_{n,m}$. In particular, the last eigenvalue equation above can be deduced easily: $(H^\flat)^\dagger\eta_n=S^{-1}HS S^{-1}\varphi_n=S^{-1}H\varphi_n= E_n S^{-1}\varphi_n=E_n\eta_n$. 

In \cite{baghat} the families $\F_\eta$ and $\F_\xi$ are used to construct new operators as we did in Section \ref{sect2} for $S_\varphi$ and $S_\Psi$, new scalar products and new adjoint maps. Here we only want to enrich Proposition \ref{prop4} by adding another equivalent condition to those listed there.

 \begin{prop}\label{prop5}
	An operator $X\in\B(\Hil)$ is a $\gamma$-symmetry if and only if 
	\be H^\flat (SX)=(SX) H.
	\label{57}\en
	\end{prop}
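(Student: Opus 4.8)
The plan is to show that condition (\ref{57}) is equivalent to one of the already-established equivalent characterizations of a $\gamma$-symmetry in Proposition \ref{prop4}, most naturally the intertwining relation (\ref{47}), $H^\dagger X = XH$. The key algebraic fact to exploit is the first equality in (\ref{55}), namely $H^\flat = (H^\dagger)^\flat = S H^\dagger S^{-1}$ — but more directly from (\ref{53}) with the self-adjointness we should just use that $H^\flat = S H^\dagger S^{-1}$ (this is $(H^\dagger)$ conjugated by $S$, which is exactly what (\ref{53}) gives applied to $H^\dagger$, or equivalently follows from $H^\flat = SH^{\flat\dagger} \ldots$; cleanest is to invoke $H^\flat = SH^\dagger S^{-1}$ which is immediate from (\ref{41})-type relations since $H^\flat$ is similar to $H^\dagger$ via $S$).

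First I would start from the candidate relation (\ref{57}): $H^\flat(SX) = (SX)H$. Substituting $H^\flat = SH^\dagger S^{-1}$ on the left-hand side gives $SH^\dagger S^{-1} S X = SH^\dagger X$, so (\ref{57}) reads $SH^\dagger X = SXH$. Since $S$ is invertible (it is a bounded positive invertible operator, as recalled in Section \ref{sect2}), I can left-multiply by $S^{-1}$ and conclude that (\ref{57}) holds if and only if $H^\dagger X = XH$, which is exactly (\ref{47}). By Proposition \ref{prop4}, (\ref{47}) is equivalent to $X$ being a $\gamma$-symmetry, and the proof is complete.

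The argument is essentially a one-line conjugation computation, so there is no serious obstacle; the only point requiring a moment's care is justifying the identity $H^\flat = SH^\dagger S^{-1}$. This follows either by applying the second formula in (\ref{53}) — wait, (\ref{53}) gives $X^\flat = SX^\dagger S^{-1}$, so directly $H^\flat = SH^\dagger S^{-1}$ with $X = H$ — so in fact it is immediate from the definition of the $\flat$-adjoint in (\ref{53}) and no appeal to (\ref{55}) or reality of the spectrum is even needed. Thus the cleanest write-up would be: expand $H^\flat$ via (\ref{53}), cancel the $S^{-1}S$, and invoke invertibility of $S$ together with Proposition \ref{prop4}. I expect the "hard part," such as it is, to be merely presentational: deciding whether to route through (\ref{47}) or directly through $\delta_\gamma(X)=0$, and making sure the reader sees that invertibility of $S$ is what licenses the cancellation.
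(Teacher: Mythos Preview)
Your proposal is correct and follows essentially the same route as the paper: substitute $H^\flat = SH^\dagger S^{-1}$ from (\ref{53}), cancel $S^{-1}S$, and reduce (\ref{57}) to the intertwining relation (\ref{47}) of Proposition \ref{prop4}. If anything, your version is slightly more careful than the paper's, since you explicitly invoke the invertibility of $S$ to justify the biconditional, whereas the paper writes the chain of equalities in one direction and leaves the reversibility implicit.
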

\begin{proof}
Indeed we have 
$$
H^\flat (SX)=SH^\dagger S^{-1}SX=SH^\dagger X=(SX)H,
$$
using (\ref{47}).

\end{proof}
This proposition is not surprising, since it is in agreement with the fact that $H$ and $H^\flat$ are isospectral. In this case the operator which intertwines between the two Hamiltonians is not $X$, as it was in (\ref{47}), but $SX$.

Of course, there is no non trivial (i.e., different from $\1$) IO between $H$ and $H^\sharp$ since they are, in fact, the same operator. Putting together the results of Propositions \ref{prop4} and \ref{prop5} we conclude, in particular, that if $X\in\B(\Hil)$ is such that $H^\flat (SX)=(SX) H$, then $\gamma^t(X)=X$. Hence we have still another method to deduce the integrals of motion for our system.

\section{Conclusions}\label{sectconcl}

This paper is another step in the direction of a full construction of an algebraic approach to the dynamics of quantum systems driven by non self-adjoint Hamiltonians. Similar ideas have been considered by several authors in many papers, but we believe this is the first attempt to create a coherent settings which can be successfully used, and possibly generalized to many other relevant situations, like to
 quantum systems living in infinite-dimensional Hilbert spaces. This extension, as already observed, is mathematically non-trivial, and will require a much more technical analysis than that considered here. Another non trivial extension is needed for those Hamiltonians which admit at least some non real eigenvalues and, even more interesting, for those in which some phase transition (from unbroken to broken phase) is observed, since these operators are particularly interesting for concrete applications. 

From a more physical point of view, it would be very interesting connect the general settings proposed in this paper with some concrete applications as those discussed, in particular, in \cite{yog1,sato}. This can contribute to enrich the already existing link between mathematics and physics of non self-adjoint Hamiltonians.

\section*{Acknowledgements}

The author acknowledges partial financial support from Palermo University (via FFR2021 "Bagarello") and from G.N.F.M. of the INdAM. The authors expresses his gratitude to Naomichi Hatano and Yogesh Joglekar for many interesting discussions at an early stage of this research.

\section*{Fundings}

The author acknowledges partial financial support from Palermo University (via FFR2021 "Bagarello").

\section*{Conflicts of interest}

There are no conflicts of interest.

\section*{Availability of data and material}

Not applicable.

\section*{Code availability}

Not applicable.

\section*{Authors' contributions}

Not applicable.

\renewcommand{\theequation}{A.\arabic{equation}}

\section*{Appendix: a simple nonlinear model}\label{appendix}

Let our system $\Sc$ be described in terms of two fermionic modes $a_1$ and $a_2$, with $\{a_i,a_j^\dagger\}=\delta_{i,j}\1$, $\{a_i,a_j\}=0$, $i,j=1,2$. Here $\1$ is the identity operator on the Hilbert space of $\Sc$, $\Hil=\mathbb{C}^4$. We assume that $\gamma^t$ is generated by $H=N_1a_2$, where $N_1=a_1^\dagger a_1$. The physical interpretation of $H$ is not so relevant, here. However, following what discussed in \cite{bagbook1,bagbook2}, we could imagine $H$ describing the lowering of the density of a second {\em biological} species in presence of a first one\footnote{More explicitly, the second species could be the food of the first one, and for this reason, in presence of the first species, the density of the second decreases.}.

Since $H^\dagger=N_1a_2^\dagger$ we find
\be
\begin{cases}
	& \frac{d}{dt}\gamma^t(a_1)=-i\gamma^t(a_1a_2) \\
	&\frac{d}{dt}\gamma^t(a_2)=i\gamma^t(N_1N_2),
\end{cases} 
\label{a1}\en
 where $N_2=a_2^\dagger a_2$. This simple system already shows what stated in Section \ref{sectnonauto}: the system is not closed. And this is not a consequence of the appearance of $a_1^\dagger$ and $a_2^\dagger$ in the second equation, but follows from the failure of the automorphism property for $\gamma^t$. For instance, we cannot expect that $\gamma^t(a_1a_2)$ and $\gamma^t(a_1)\gamma^t(a_2)$ coincide. And, indeed, this is what we will now check explicitly, for this simple model.
 
 The equation of motion for  $\gamma^t(a_1a_2)$ and $\gamma^t(N_1N_2)$ are particularly easy. In fact, from (\ref{33}), and noticing that $\delta_\gamma(a_1a_2)=i\left(H^\dagger a_1a_2-a_1a_2 H\right)=0$, we find $\frac{d}{dt}\gamma^t(a_1a_2)=0$, so that $\gamma^t(a_1a_2)=\gamma^0(a_1a_2)=a_1a_2$.
Analogously we find that $\delta_\gamma(N_1N_2)=0$, so that $\gamma^t(N_1N_2)=N_1N_2$. This implies that the system (\ref{a1}) can be easily solved, and the have
$\gamma^t(a_1)=-ia_1a_2t+a_1$ and $\gamma^t(a_2)=iN_1N_2t+a_2$. It is now clear that, as we have anticipated,  $\gamma^t(a_1a_2)\neq \gamma^t(a_1)\gamma^t(a_2)$. 

Notice that, for this particular model, the knowledge of $\gamma^t(a_j)$ implies also the knowledge of $\gamma^t(a_j^\dagger)$. However, to compute $\gamma^t(N_j)$,  it is not correct to compute $\gamma^t(a_j^\dagger)\,\gamma^t(a_j)$, since this is, in general, different from $\gamma^t(a_j^\dagger a_j)$. Hence, to compute $\gamma^t(N_j)$, we need to use (\ref{33}) directly on $N_j$.

\end{document}